\newcommand{\chr}{\mathds{1}}
\newcommand{\E}{\mathbb{E}}
\def\clap#1{\hbox to 0pt{\hss#1\hss}}
\newcommand{\inv}{^{-1}}
\newcommand{\kl}{\operatorname{KL}}
\renewcommand{\d}{\mathrm{d}}
\newcommand{\bal}{\beta}
\newcommand{\ben}{\begin{enumerate}}
\newcommand{\een}{\end{enumerate}}
\newcommand{\bit}{\begin{itemize}}
\newcommand{\eit}{\end{itemize}}
\newcommand{\nrm}[1]{\left\Vert #1 \right\Vert}
\newcommand{\calF}{\mathcal{F}}
\newcommand{\calP}{\mathcal{P}}
\newcommand{\R}{\mathbb{R}}
\newcommand{\beq}{\begin{eqnarray*}}
\newcommand{\eeq}{\end{eqnarray*}}
\newcommand{\beqn}{\begin{eqnarray}}
\newcommand{\eeqn}{\end{eqnarray}}
\newcommand{\paren}[1]{\left( #1 \right)}
\newcommand{\sqprn}[1]{\left[ #1 \right]}
\newcommand{\tlprn}[1]{\left\{ #1 \right\}}
\newcommand{\set}[1]{\tlprn{#1}}
\newcommand{\abs}[1]{\left| #1 \right|}
\newcommand{\gn}{\, | \,}
\newcommand{\hide}[1]{}
\def\eps{\varepsilon}
\newcommand{\sigalg}{\calF}
\renewcommand{\bepf}{\begin{proof}}
\renewcommand{\enpf}{\end{proof}}
\begin{document}
\title{ 
Minimum KL-divergence on complements of $L_1$ balls$^*$
\thanks{
$^*$A previous version had the title ``A Reverse Pinsker Inequality''.
}
}
\author{ Daniel Berend, Peter Harremo{\"e}s, and Aryeh Kontorovich }
\maketitle
\begin{abstract}
Pinsker's widely used inequality upper-bounds the total variation distance $%
\left\Vert P-Q \right\Vert_1$ in terms of the Kullback-Leibler divergence $%
D(P\Vert Q)$. Although in general a bound in the reverse direction is
impossible, in many applications the quantity of interest is actually $%
D^*(v,Q)$ --- defined, for an arbitrary fixed $Q$, as the infimum of $%
D(P\Vert Q) $ over all distributions $P$ that are at least $v$-far away from $Q$ in total
variation. We show that $D^*(v,Q)\le Cv^2 + O(v^3)$, where $C=C(Q)=%
\nicefrac{1}{2}$ for ``balanced'' distributions, thereby providing a kind of
reverse Pinsker inequality. Some of the structural results obtained in the
course of the proof may be of independent interest. An application to large
deviations is given.
\end{abstract}
\section{Introduction}
\subsection{Pinsker's inequality}
The inequality bearing Pinsker's name states that
for two distributions $P$ and $Q$,
\begin{equation}
D(P\Vert Q) \geq \frac{V^{2}( P,Q) }{2},\label{eq:pinsker}
\end{equation}%
where
\begin{equation*}
D(P\Vert Q) =\int \ln\paren{\frac{\d P}{\d Q}} \,\d P
\end{equation*}%
is the Kullback-Leibler divergence of $P$ from $Q$ and 
$V(P,Q)=\left\Vert P-Q\right\Vert _{1}$ is their total variation distance.
Actually, the name is a bit of a misattribution, since the explicit form
of (\ref{eq:pinsker}) was obtained by 
Csisz{\'a}r \cite{MR0219345} and Kullback \cite{kullback67} in 1967
and is occasionally referred to by their names.
Gradual improvements were obtained by \cite%
{MR0214714,MR1984937,MR0270834,MR0252112,0176.49106,MR0214112,MR1873865,MR0479685,MR0275575}
and others; see \cite{DBLP:conf/colt/ReidW09} for a detailed history and the
\textquotedblleft best possible Pinsker inequality\textquotedblright .
Recent extensions to general $f$-divergences may be found in \cite{MR2808583}
and \cite{DBLP:conf/colt/ReidW09}.
This 
inequality has become a ubiquitous tool in
probability \cite{MR815975,MR1739680,marton96}, 
information theory \cite{MR2099014}, and, more recently, machine
learning \cite{MR2409394}. 
It will be useful to define the function $\mathrm{KL}_{2}:(0,1)^{2}%
\rightarrow \lbrack 0,\infty )$ by 
\begin{equation*}
\mathrm{KL}_{2}(p,q)=p\ln {\frac{p}{q}}+(1-p)\ln {\frac{1-p}{1-q}}
\end{equation*}%
and the so-called Vajda's tight lower bound $L$ \cite{MR0275575}: 
\begin{equation*}
L(v)=\inf_{P,Q:V(P,Q)=v}D(P\Vert Q) .
\end{equation*}%
In \cite{MR1984937} an exact parametric equation of the curve $( v,L(v)) _{0<v<\infty
}$ in $\mathbb{R}^{2}$ was given: 
\begin{eqnarray*}
v(t) &=& t\sqprn{ 1-\paren{ \coth t-\frac{1}{t}}^{2}} , \\
L(v(t)) &=&\ln \frac{t}{\sinh t}+t\coth t-\paren{\frac{t}{\sinh t}}^2.
\end{eqnarray*}
Some upper bounds on the KL-divergence in terms of other $f$%
-divergences are known \cite{MR1879596,MR2065310,MR1872472}, and in \cite[Lemma 3.10]{DBLP:conf/ijcai/Even-DarKM07}
it is shown that, under some conditions, $D(P\Vert Q)\le \nrm{P-Q}_1\ln(1/\min Q)$.
The latter
estimate 
is vacuous for $Q$ with infinite support. 
In
general,
it is impossible to upper-bound $D(P\Vert Q) $ in
terms of $V(P,Q)$, since for every $v\in (0,2]$ there is a pair of
distributions $P,Q$ with $V(P,Q)=v$ and $D(P\Vert Q) =\infty $ \cite{MR1984937}.
However, in many applications, the actual quantity of interest is not $%
D(P\Vert Q)$ for arbitrary $P$ and $Q$, but rather 
\begin{equation}
D^*(v,Q)=\inf_{P:V(P,Q)\geq v}D(P\Vert Q) .\label{eq:Ddef}
\end{equation}%
For example, Sanov's Theorem \cite{MR2239987,MR1739680} (which we will
say more about below) implies that the probability that the empirical
distribution $\hat{Q}_{n}$, based on a sample of size $n$, deviates in $\ell
_{1}$ by more than $v$ from the true distribution $Q$ behaves asymptotically
as $\exp ( -nD^*(v,Q)) $.
Throughout this paper, we consider a (finite or $\sigma$-finite)
measure space $(\Omega ,\sigalg ,\mu )$, and all the distributions in
question will be defined on this space and assumed absolutely continuous
with respect to $\mu $; this set of distributions will be denoted by $%
\calP$. 
We will consistently use upper-case letters for distributions $P\in \mathcal{%
P}$ and corresponding lower-case letters for their densities $p$ with
respect to $\mu $. 
We will use standard asymptotic notation $O(\cdot)$ and $\Omega(\cdot)$.
\subsection{Balanced and unbalanced distributions}
In this paper, we show that for the broad class of ``balanced''
distributions, $D^*(v,Q)= v^2/2 +O(v^4)$, which matches the form of the
bound in (\ref{eq:pinsker}). For distributions not belonging to this class,
we show that 
\begin{eqnarray*}
D^*(v,Q)=\frac{v^2}{8\bal(1-\bal)}-O
(v^3),
\end{eqnarray*}
where $\bal$ is a measure of the ``imbalance'' of $Q$ defined below; this
may also be interpreted as a reverse Pinsker inequality.
The \emph{range} of a distribution is 
\begin{equation*}
\mathcal{R}(Q)=\left\{ Q(A):A\in \sigalg \right\} .
\end{equation*}%
A distribution $Q$ has \emph{full range} if $\mathcal{R}(Q)=[0,1]$.
Non-atomic distributions on $\R$ have full range. 
The \emph{balance coefficient} of a distribution $Q$ is 
\begin{eqnarray*}
\bal = \inf\left\{ x\in\mathcal{R}(Q): x\ge\nicefrac{1}{2} \right\}.
\end{eqnarray*}
A distribution is \emph{balanced} if $\bal=\nicefrac{1}{2}$ and \emph{%
unbalanced} otherwise. In particular, all distributions with full range are
balanced. 
Note that the balance coefficient of a  discrete distribution $Q$ is bounded by\footnote{
Since we will not use this fact in the sequel, we only give a proof sketch.
The case where $q_{\max}\ge\nicefrac{1}{2}$ is trivial,
so assume $q_{\max}<\nicefrac{1}{2}$.
Consider the following greedy algorithm: 
initialize $A$ to be
the empty set and
repeatedly include the heaviest available atom such that $A$'s total mass remains
under $\nicefrac{1}{2}$ (once an atom has been added to $A$, it is no longer ``available''). 
If $\omega$ is the first atom 
whose inclusion will
bring 
$A$'s mass over $\nicefrac{1}{2}$,
either $A\cup\set{\omega}$ or $\Omega\setminus A$ establishes the 
bound in (\ref{eq:balqmax}).
}
\beqn
\label{eq:balqmax}
\bal \le  \frac{1}{2} + \frac{q_{\max}}{2}
,
\eeqn
where $q_{\max}=\max_{\omega\in\Omega}q(\omega)$.
Ordentlich and Weinberger \cite{1424321} considered the following
distribution-dependent refinement of Pinsker's inequality. For a
distribution $Q$ with balance coefficient $\beta $, define $\varphi (Q)$ by 
\begin{equation*}
\varphi (Q)=\frac{1}{2\beta -1}\ln \frac{\beta }{1-\beta }
\end{equation*}%
(for $\beta =\nicefrac{1}{2}$, $\varphi (Q)=2$). It is shown in \cite%
{1424321} that 
\begin{equation}
D(P\Vert Q) \geq \frac{\varphi (Q)}{4}V(P,Q)^{2}
\label{eq:OW}
\end{equation}%
for all $P,Q$, and furthermore, that $\varphi (Q)/4$ is the best $Q$%
-dependent coefficient possible: 
\begin{equation}
\inf_{P}\frac{D(P\Vert Q) }{V(P,Q)^{2}}=\frac{%
\varphi (Q)}{4}.\label{eq:OW-tight}
\end{equation}%
Although the left-hand sides of (\ref{eq:Ddef}) and (\ref{eq:OW-tight})
bear a superficial resemblance, the two quantities are quite different (in
particular, the former is constrained by $V(P,Q)\geq v$). While
distribution-independent versions of (\ref{eq:OW}) exist (viz., 
(\ref{eq:pinsker})),
our main result (Theorem \ref{thm:rev-pinsker}) does
not admit a distribution-independent form. Simply put, the result in \cite%
{1424321} yields a lower bound on $D^*(v,Q)$, while we seek to
upper-bound this quantity --- and actually compute it exactly for unbalanced
distributions. 
\section{Main results}
We can now state our reverse Pinsker inequality:
\begin{thm}
\label{thm:rev-pinsker} Suppose $Q\in \calP$ has balance coefficient $%
\beta $. Then:
\begin{enumerate}
\item[(a)] For $\beta \geq \nicefrac{1}{2}$ and $0<v<1$, 
\begin{equation*}
L(v)\leq D^*(v,Q)\leq 
\kl_{2}(\bal -\nicefrac{v}{2},
\bal
) .
\end{equation*}
\item[(b)] For $\beta >\nicefrac{1}{2}$ and 
$0<v<4(\bal-\nicefrac12)$,
\begin{equation*}
D^*(v,Q)=\mathrm{KL}_{2}( \beta -\nicefrac{v}{2},\beta ) .
\end{equation*}
\end{enumerate}
\end{thm}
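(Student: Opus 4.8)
First I would dispatch the easy inequality $L(v)\le D^*(v,Q)$ of part (a) by a convexity/retraction argument. Given any $P$ with $V(P,Q)\ge v$, set $t=v/V(P,Q)\in(0,1]$ and $P_t=(1-t)Q+tP$. Then $P_t-Q=t(P-Q)$, so $V(P_t,Q)=t\,V(P,Q)=v$, while convexity of $D(\cdot\Vert Q)$ together with $D(Q\Vert Q)=0$ gives $D(P_t\Vert Q)\le t\,D(P\Vert Q)\le D(P\Vert Q)$. Since the pair $(P_t,Q)$ is admissible in the infimum defining $L(v)$, we get $L(v)\le D(P_t\Vert Q)\le D(P\Vert Q)$; taking the infimum over admissible $P$ yields $L(v)\le D^*(v,Q)$.

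For the upper bound of (a) I would exhibit an explicit two-block competitor. Fix $\varepsilon>0$ and pick $A\in\sigalg$ with $\gamma:=Q(A)\in[\beta,\beta+\varepsilon)$, which is possible since $\beta=\inf\{x\in\mathcal{R}(Q):x\ge\nicefrac12\}$; when $\beta<1$ we may also take $\gamma<1$ (if $\beta=1$ the asserted bound is $+\infty$ and there is nothing to prove). Define $P$ by rescaling $Q$ on each block: $p=\frac{\gamma-\nicefrac v2}{\gamma}\,q$ on $A$ and $p=\frac{1-\gamma+\nicefrac v2}{1-\gamma}\,q$ on $A^{c}$ (here $0<v<1\le2\beta$ ensures $\gamma-\nicefrac v2>0$). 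A direct computation gives $V(P,Q)=\nicefrac v2+\nicefrac v2=v$, so $P$ is admissible, and because $p/q$ is constant on each block, $D(P\Vert Q)=\mathrm{KL}_2(\gamma-\nicefrac v2,\gamma)$. Letting $\varepsilon\to0$ and invoking continuity of $\mathrm{KL}_2$ gives $D^*(v,Q)\le\mathrm{KL}_2(\beta-\nicefrac v2,\beta)$, which is the upper bound in both (a) and (b).

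The substance of (b) is the matching lower bound $D^*(v,Q)\ge\mathrm{KL}_2(\beta-\nicefrac v2,\beta)$. The key structural step is a coarsening reduction to a binary problem. For any admissible $P$, put $A^{*}=\set{p>q}$; then $V(P,Q)=2\,(P(A^{*})-Q(A^{*}))$, so with $a:=Q(A^{*})\in\mathcal{R}(Q)$ and $b:=P(A^{*})$ we have $b-a\ge\nicefrac v2$ (and $a\le1-\nicefrac v2$ since $b\le1$), and the data-processing inequality for the partition $\{A^{*},(A^{*})^{c}\}$ yields $D(P\Vert Q)\ge\mathrm{KL}_2(b,a)$. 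Since $\mathrm{KL}_2(\cdot,a)$ is increasing on $[a,1]$ we may replace $b$ by $a+\nicefrac v2$, and therefore $D^*(v,Q)\ge\inf\{h(a):a\in\mathcal{R}(Q),\ a\le1-\nicefrac v2\}$, where $h(a):=\mathrm{KL}_2(a+\nicefrac v2,a)$.

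It remains to analyse the one-dimensional function $h$, and this is where I expect the real work. Because $\mathcal{R}(Q)$ is symmetric about $\nicefrac12$ and meets $[\nicefrac12,1]$ only in $[\beta,1]$, we have $\mathcal{R}(Q)\subseteq[0,1-\beta]\cup[\beta,1]$. The plan is to show $h$ is unimodal with a unique minimizer $a^{\star}$ obeying $\nicefrac12-\nicefrac v4<a^{\star}<\nicefrac12$: the bound $a^{\star}<\nicefrac12$ comes from $h'(\nicefrac12)=\ln\frac{1+v}{1-v}-2v>0$, and $a^{\star}>\nicefrac12-\nicefrac v4$ from $h'(\nicefrac12-\nicefrac v4)<0$ (a one-line computation). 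The hypothesis $v<4(\beta-\nicefrac12)$ is precisely $1-\beta<\nicefrac12-\nicefrac v4$, so $1-\beta<a^{\star}<\beta$; hence $h$ decreases on $[0,1-\beta]$ and increases on $[\beta,1-\nicefrac v2]$, so that $\inf_{a\in\mathcal{R}(Q),\,a\le1-\nicefrac v2}h(a)\ge\min\{h(1-\beta),h(\beta)\}$. Finally $h(1-\beta)=\mathrm{KL}_2(\beta-\nicefrac v2,\beta)$, $h(\beta)=\mathrm{KL}_2(\beta+\nicefrac v2,\beta)$, and for $\beta>\nicefrac12$ one has $\mathrm{KL}_2(\beta-\nicefrac v2,\beta)<\mathrm{KL}_2(\beta+\nicefrac v2,\beta)$ because $t\mapsto\mathrm{KL}_2(\beta+t,\beta)-\mathrm{KL}_2(\beta-t,\beta)$ vanishes at $0$ and has derivative of the sign of $2\beta-1$. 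This gives $D^*(v,Q)\ge h(1-\beta)=\mathrm{KL}_2(\beta-\nicefrac v2,\beta)$, matching the upper bound. The main obstacle is the unimodality of $h$: since $h'(a)=\ln\frac{(a+v/2)(1-a)}{a(1-a-v/2)}-\frac{v}{2a(1-a)}$ is transcendental, I would establish single-crossing by comparing the elementary term $\frac{v}{2a(1-a)}$ with the logarithmic term and showing their difference changes sign exactly once on $(0,1-\nicefrac v2)$; the remaining pieces are convexity and a routine monotonicity comparison.
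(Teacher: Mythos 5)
Your proposal is correct, and underneath the reorganization it is essentially the paper's own argument. Part (a): your mixing argument for $L(v)\leq D^*(v,Q)$ is the paper's Lemma~\ref{lem:=eps}, and your two-block competitor $p=\frac{\gamma-\nicefrac{v}{2}}{\gamma}\,q$ on $A$, $p=\frac{1-\gamma+\nicefrac{v}{2}}{1-\gamma}\,q$ on $\Omega\setminus A$ is exactly the extremal density of Lemma~\ref{lem:PQA} (equivalently, the lift of the optimal binary $P$ through Lemma~\ref{lem:bin-enuf}(i)). Part (b): your data-processing reduction to $\inf\{h(a):a\in\mathcal{R}(Q),\ a\le 1-\nicefrac{v}{2}\}$ with $h(a)=\kl_2(a+\nicefrac{v}{2},a)$ is a streamlined version of Lemma~\ref{lem:bin-enuf}(ii) --- streamlined because you apply data processing to an arbitrary admissible $P$, so you need neither the equality $D^*(v,Q)=D^*(v,\pi(Q))$ nor Lemma~\ref{lem:=eps} on this side; your comparison $h(1-\beta)<h(\beta)$ for $\beta>\nicefrac{1}{2}$ is the content of Lemma~\ref{lem:opt-q-bin}; and your monotonicity claims for $h$ are precisely Lemma~\ref{lem:voks}, since $h(a)=\kl_2(1-a-\nicefrac{v}{2},1-a)=F(1-a)$ with $\delta=\nicefrac{v}{2}$.

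The step you flag as ``the main obstacle'' --- unimodality of $h$, which you propose to prove by a single-crossing analysis of the transcendental derivative --- is in fact immediate, and by the very device the paper uses in Lemma~\ref{lem:voks}: the map $a\mapsto\bigl((a+\nicefrac{v}{2},1-a-\nicefrac{v}{2}),(a,1-a)\bigr)$ is affine and KL-divergence is jointly convex, so $h$ is convex; hence $h'$ is nondecreasing, and your two endpoint evaluations $h'(\nicefrac{1}{2}-\nicefrac{v}{4})<0<h'(\nicefrac{1}{2})$ (both of which check out) already pin the single sign change, and the minimizer, inside $(\nicefrac{1}{2}-\nicefrac{v}{4},\nicefrac{1}{2})$. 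One small repair: the hypothesis $v<4(\beta-\nicefrac{1}{2})$ does not prevent $\beta>1-\nicefrac{v}{2}$, in which case $h(\beta)$ is undefined and the interval $[\beta,1-\nicefrac{v}{2}]$ is empty; then the constrained range lies entirely in $[0,1-\beta]$, where convexity and $h'(1-\beta)\le h'(\nicefrac{1}{2}-\nicefrac{v}{4})<0$ give $\inf h\ge h(1-\beta)$ outright, so this case should be stated separately rather than writing $\min\{h(1-\beta),h(\beta)\}$ (the same caveat applies to your derivative-sign comparison, which needs $\beta+t<1$). With these two observations your outline closes completely.
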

As a comparison of orders of magnitude, note that 
\begin{eqnarray*}
\mathrm{KL}_{2}( \beta -\nicefrac{v}{2},\beta ) &=&\frac{v^{2}}{%
8\beta (1-\beta )} 
-\frac{(2\beta -1)v^{3}}{48{\beta }^{2}(1-\beta )^{2}}+O(v^{4}), \\
\mathrm{KL}_{2}( \nicefrac{1}{2}-\nicefrac{v}{2},\nicefrac{1}{2})
&=&\frac{v^{2}}{2}+\frac{v^{4}}{12}+O(v^{6}), \\
L(v) &=&\frac{v^{2}}{2}+\frac{v^{4}}{36}+\Omega(v^{6}),
\end{eqnarray*}%
where the 
first two expansions are straightforward and the
last one is well-known \cite{MR1984937}.
Combining the bound of Ordentlich and Weinberger (\ref{eq:OW}) with Theorem %
\ref{thm:rev-pinsker}, we get 
\begin{eqnarray*}
\frac{1}{4(2\beta -1)}\ln \frac{\beta }{1-\beta }v^{2} &\leq &D^*(v,Q)
\\
&\le&\mathrm{KL}_{2}( \beta -\nicefrac{v}{2},\beta ) \\
&=&\frac{v^{2}}{8\beta (1-\beta )}-O( v^{3}) .
\end{eqnarray*}%
As a consistency check, one may verify that 
\begin{equation*}
\frac{1}{4(2\beta -1)}\ln \frac{\beta }{1-\beta }\leq \frac{1}{8\beta
(1-\beta )}
\end{equation*}%
for $\nicefrac{1}{2}\leq \beta <1$.
\begin{thm}
\label{thm:full-range} 
If $Q\in \calP$ has full range, then 
\begin{equation*}
D^*( v,Q) =L(v),
\qquad
0<v<2
.
\end{equation*}
\end{thm}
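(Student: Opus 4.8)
The plan is to establish the two inequalities $D^*(v,Q)\ge L(v)$ and $D^*(v,Q)\le L(v)$ separately; the first holds for every $Q$, while the second is where the full-range hypothesis does its work. For the lower bound I would first record that $L$ is nondecreasing on $(0,2)$: given any pair $P,Q'$ with $V(P,Q')=w$, the interpolant $P_\lambda=(1-\lambda)Q'+\lambda P$ satisfies $V(P_\lambda,Q')=\lambda w$, and by convexity of $D(\cdot\Vert Q')$ together with $D(Q'\Vert Q')=0$ it satisfies $D(P_\lambda\Vert Q')\le\lambda D(P\Vert Q')$; hence every total-variation value below $w$ is attainable at no greater divergence, so $L$ cannot decrease. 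Then for any $P$ with $V(P,Q)=w\ge v$ the definition of $L$ and monotonicity give $D(P\Vert Q)\ge L(w)\ge L(v)$, and taking the infimum over such $P$ yields $D^*(v,Q)\ge L(v)$ for all $0<v<2$. (This reproduces, and extends past $v=1$, the left inequality of Theorem~\ref{thm:rev-pinsker}(a), which already applies since a full-range $Q$ is balanced.)

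The upper bound is the heart of the argument and rests on the extremal structure of $L$. Writing $g=\d P/\d Q$, both $D(P\Vert Q)=\int g\ln g\,\d Q$ and $V(P,Q)=\int\abs{g-1}\,\d Q$ depend only on the law $\nu$ of $g$ under $Q$, so $L(v)$ is the value of a linear program over probability measures $\nu$ on $[0,\infty)$ subject to the linear constraints $\int g\,\d\nu=1$ and $\int(g-1)_+\,\d\nu=\int(1-g)_+\,\d\nu=\nicefrac v2$. I would analyze this program by convex duality: since $g\mapsto g\ln g-\lambda(g-1)_+$ and $g\mapsto g\ln g-\mu(1-g)_+$ are each strictly convex, they admit a single minimizer apiece, so the complementary-slackness support of an optimal $\nu$ consists of at most one point $a\in[0,1)$ and one point $b>1$. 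Thus the extremal likelihood ratio is two-valued and $L(v)=(1-s)a\ln a+sb\ln b$ for some $s\in(0,1)$ with $(1-s)(1-a)=s(b-1)=\nicefrac v2$; this is the binary description of Vajda's bound, consistent with the parametric curve of \cite{MR1984937}.

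With the binary description secured, the full-range hypothesis makes the realization immediate. Since $\mathcal{R}(Q)=[0,1]$, I would choose a set $A\in\sigalg$ with $Q(A)=s$ and define $P\in\calP$ by $\d P/\d Q=b\,\chr_{A}+a\,\chr_{\Omega\setminus A}$. The identity $(1-s)a+sb=1$ (which follows from $(1-s)(1-a)=s(b-1)$) guarantees that $P$ is a probability measure, while by construction $V(P,Q)=(1-s)(1-a)+s(b-1)=v$ and $D(P\Vert Q)=(1-s)a\ln a+sb\ln b=L(v)$. Hence $D^*(v,Q)\le L(v)$, and combining with the lower bound gives $D^*(v,Q)=L(v)$ throughout $0<v<2$.

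I expect the main obstacle to be the extremal-structure step: rigorously collapsing $L(v)$ to a two-valued likelihood ratio, and checking that for every $v\in(0,2)$ the optimizing parameter $s$ is strictly interior (so that a genuine set of measure $s$, rather than a degenerate limit, realizes the extremum; as $v\uparrow 2$ one should see $a\downarrow 0$, $b\uparrow\infty$ and $s\downarrow 0$ with $L(v)\uparrow\infty$). Once this reduction is in place the passage to a fixed $Q$ is the easy part, since it requires only a single set of prescribed $Q$-measure, which is exactly what full range supplies. For a merely balanced $Q$ this set need not exist, and this is precisely why the sharp conclusion fails there and only the weaker $\mathrm{KL}_{2}$ bound of Theorem~\ref{thm:rev-pinsker} survives.
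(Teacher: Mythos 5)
Your proposal is correct in outline but takes a genuinely different route from the paper's. The paper computes $D^*(v,Q)$ \emph{exactly} for an arbitrary $Q$: it uses Lemma~\ref{lem:=eps} to pass to the equality constraint $V(P,Q)=v$, partitions the feasible set according to $A=\set{q<p}$, and applies Lemma~\ref{lem:PQA} (a Pythagorean identity showing that the optimal $P$ over each cell has a two-valued density ratio) to obtain $D^*(v,Q)=\inf_A \kl_2(Q(A)+\nicefrac{v}{2},Q(A))$; full range then makes $Q(A)$ sweep all of $(0,1-\nicefrac{v}{2})$, and the cited identity $L(v)=\inf_{0<x<1-\nicefrac{v}{2}}\kl_2(x+\nicefrac{v}{2},x)$ finishes the proof. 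You instead sandwich: a lower bound via monotonicity of $L$ (clean, and it bypasses Lemma~\ref{lem:=eps} entirely), and an upper bound by realizing the binary extremizer of $L$ inside $(\Omega,Q)$ --- your realization step is precisely the mixture construction of the paper's Lemma~\ref{lem:bin-enuf}(i), with full range supplying a set of measure $s$. The trade-off: the paper's route yields an exact formula for $D^*(v,Q)$ valid for \emph{every} $Q$ (which it reuses for Theorem~\ref{thm:rev-pinsker}(b)), whereas your route localizes the full-range hypothesis into one transparent step but leans on the binary extremal structure of $L$, which you only sketch via an infinite-dimensional LP duality argument (existence of an optimal $\nu$, complementary slackness for measure-valued programs, and possible mass at $g=1$ all need care there, as you acknowledge). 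That structure is, however, exactly the known fact the paper itself invokes --- binary pairs suffice for Vajda's bound \cite{MR1984937,MR2817015} --- so the right move is to cite it rather than re-derive it. With that citation you can also sidestep the attainment question you flag: take an $\eps$-near-optimal binary pair $(P',Q')$ with $V(P',Q')=v$ and $D(P'\Vert Q')<L(v)+\eps$ (such a pair is automatically nondegenerate, since a degenerate $Q'$ forces $V=0$ or $D=\infty$), realize it under $Q$ via full range, and let $\eps\to0$ to conclude $D^*(v,Q)\le L(v)$ without ever needing the infimum to be achieved at an interior $s$.
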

\section{Proofs}
We will repeatedly invoke the standard fact that $D(\cdot\Vert\cdot)$ is convex in both arguments 
\cite{MR2239987,erven2012}.
Our first lemma provides a structural result for extremal distributions.
Suppose a distribution $Q\in \calP$ is given, along with an $A\in 
\sigalg $ and a $0<v\leq 2( 1-Q(A)) $. 
Denote by $\calP(Q,A,v)$ the set of all distributions $P\in \calP$ for which $V(
P,Q) =v$ and 
$A=\set{\omega\in\Omega: q(\omega)<p(\omega)}$.
The above ``restriction'' on the range of $v$
derives from the fact that every $P\in\calP(Q,A,v)$ must satisfy
$V(P,Q)\le2(1-Q(A))$.
\begin{lem}
\label{lem:PQA} 
For all $Q\in \calP$, $A\in \sigalg$ with $%
0<Q(A)<1 $, and $v\in (0,2(1-Q(A))]$, let $P^*\in \calP$ be the measure with density 
\begin{equation*}
p^*=(a\chr_{A}+b\chr_{\Omega
\setminus A})q,
\end{equation*}%
where 
\begin{equation*}
a=1+\frac{v}{2Q(A)},\qquad b=1-\frac{v}{2(1-Q(A))}.
\end{equation*}%
Then $P^*$ belongs to $\calP(Q,A,v)$, and $P^*$ is the
unique minimizer of $D(P\Vert Q)$ over $P\in \calP(Q,A,v)$.
\end{lem}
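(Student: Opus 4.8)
The plan is to establish a Pythagorean-type identity
\[
D(P\Vert Q) = D(P\Vert P^*) + D(P^*\Vert Q)
\]
valid for \emph{every} $P\in\calP(Q,A,v)$, from which both minimality and uniqueness of $P^*$ follow at once: the right-hand side is minimized exactly when $D(P\Vert P^*)=0$, and since the Kullback--Leibler divergence is nonnegative and vanishes only when $P=P^*$ ($\mu$-a.e.), $P^*$ is the unique minimizer. The whole argument rests on first showing that the defining constraints of $\calP(Q,A,v)$ pin down the masses $\int_A p\,\d\mu$ and $\int_{\Omega\setminus A}p\,\d\mu$ to values that are the \emph{same} for all feasible $P$.

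First I would check that $P^*$ is a bona fide member of $\calP(Q,A,v)$. A direct computation gives $\int p^*\,\d\mu = aQ(A)+b(1-Q(A)) = 1$, so $P^*$ is a probability measure; since $a>1$ and $0\le b<1$ (the latter using $v\le 2(1-Q(A))$), the density $p^*$ is larger than $q$ precisely on $A$ and no larger than $q$ elsewhere, so $\set{\omega:q(\omega)<p^*(\omega)}=A$; and $\int\abs{p^*-q}\,\d\mu = (a-1)Q(A)+(1-b)(1-Q(A)) = \nicefrac{v}{2}+\nicefrac{v}{2} = v$.

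The key reduction is this. Since $\int p\,\d\mu = \int q\,\d\mu = 1$ while $p-q>0$ on $A$ and $p-q\le 0$ on $\Omega\setminus A$, the total balance $\int_A(p-q)\,\d\mu = -\int_{\Omega\setminus A}(p-q)\,\d\mu$ gives $V(P,Q)=\int\abs{p-q}\,\d\mu = 2\int_A(p-q)\,\d\mu$ for every $P\in\calP(Q,A,v)$. Hence, \emph{given the sign condition}, the constraint $V(P,Q)=v$ is equivalent to the single linear constraint $\int_A p\,\d\mu = Q(A)+\nicefrac{v}{2}$, and consequently $\int_{\Omega\setminus A}p\,\d\mu = (1-Q(A))-\nicefrac{v}{2}$; these are exactly the masses carried by $P^*$. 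Now, for any feasible $P$ with $D(P\Vert Q)<\infty$ (the only case of interest, as an infinite divergence trivially exceeds the finite $D(P^*\Vert Q)$), I would use that $\ln(p^*/q)$ equals the constant $\ln a$ on $A$ and $\ln b$ on $\Omega\setminus A$ to write $\int p\ln(p^*/q)\,\d\mu = \ln a\int_A p\,\d\mu + \ln b\int_{\Omega\setminus A}p\,\d\mu$. By the reduction this expression is unchanged when $p$ is replaced by $p^*$, so it equals $\int p^*\ln(p^*/q)\,\d\mu = D(P^*\Vert Q)$. Since $\ln(p/q)=\ln(p/p^*)+\ln(p^*/q)$, integrating against $p$ then yields the advertised identity.

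The step I expect to require the most care is the reduction together with the attendant integrability bookkeeping: one must justify restricting to $P\ll Q$ (whence $P\ll P^*$, so that $D(P\Vert P^*)$ is well defined), and separately treat the boundary case $v=2(1-Q(A))$, where $b=0$. In that case the forced mass $\int_{\Omega\setminus A}p\,\d\mu=0$ makes $p$ vanish $\mu$-a.e. on $\Omega\setminus A$ for every feasible $P$, so the problem collapses to a minimization over $A$ alone and the same identity goes through with the $\Omega\setminus A$ contributions dropped. Everything else is the routine verification of the first two paragraphs.
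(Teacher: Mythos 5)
Your proof is correct and takes essentially the same route as the paper's: both establish the Pythagorean identity $D(P\Vert Q)=D(P\Vert P^*)+D(P^*\Vert Q)$ for every $P\in\calP(Q,A,v)$, exploiting that $\ln(p^*/q)$ is constant on $A$ and on $\Omega\setminus A$ while the constraints force $P(A)=Q(A)+\nicefrac{v}{2}$ and $P(\Omega\setminus A)=1-Q(A)-\nicefrac{v}{2}$ for all feasible $P$, so that minimality and uniqueness follow from nonnegativity of $D(P\Vert P^*)$. The only difference is that you spell out the membership check, the integrability bookkeeping, and the boundary case $b=0$, which the paper passes over as obvious.
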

\begin{proof}
Obviously, $P^{\ast}$ belongs to $\calP(Q,A,v)$. 
We claim that
\begin{equation}
\label{eq:PP*Q}
D(P\Vert Q) = D( P\left\Vert P^{\ast
}\right. ) +D(P^*\Vert Q)
\end{equation}
holds 
for
all $P\in \calP(Q,A,v)$,
whence the lemma follows.
Indeed, putting $B=\Omega\setminus A$ and using the fact that
\beq
D(P\Vert P^*) \!\!&=& \!\! D(P\Vert Q)-P(A)\ln a - P(B)\ln b,\\
D(P^*\Vert Q) \!\!&=&\!\! Q(A)a\ln a + Q(B)b\ln b,
\eeq
we see that (\ref{eq:PP*Q}) is equivalent to
the identity
$$ (Q(A)-P(A)+\nicefrac{v}{2})\ln a+(Q(B)-P(B)-\nicefrac{v}{2})\ln b =0,$$
which follows immediately from the elementary fact that
$$ P(A)-Q(A)=Q(B)-P(B)=\nicefrac{v}{2}.$$
\end{proof}
Our next result is that $D^*$ actually has a somewhat simpler form than 
the original definition
(\ref{eq:Ddef}).
\begin{lem}
\label{lem:=eps}
For all distributions $Q$ and all $v>0$, 
\begin{equation*}
D^*(v,Q)=\inf_{P:V(P,Q)=v}D(P\Vert Q) .
\end{equation*}
\end{lem}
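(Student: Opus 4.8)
The plan is to establish the two inequalities separately. One direction will be immediate: since $\set{P : V(P,Q)=v} \subseteq \set{P : V(P,Q)\geq v}$, the infimum over the larger set can only be smaller, so $D^*(v,Q) \leq \inf_{P:V(P,Q)=v} D(P\Vert Q)$. The content of the lemma lies in the reverse inequality, that restricting the total variation to be \emph{exactly} $v$ does not raise the infimum. To prove it I would show that any competitor $P$ with $V(P,Q)\geq v$ can be replaced by a distribution at distance exactly $v$ whose divergence is no larger.

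The device I would use is the straight-line interpolation $P_\lambda = \lambda P + (1-\lambda)Q$ for $\lambda\in[0,1]$, which remains in $\calP$ by convexity of the latter. Two elementary facts make this work. First, total variation is homogeneous along the segment: from $P_\lambda - Q = \lambda(P-Q)$ one gets $V(P_\lambda,Q)=\lambda V(P,Q)$, which increases continuously from $0$ at $\lambda=0$ to $V(P,Q)$ at $\lambda=1$; since $v\leq V(P,Q)$, the choice $\lambda^* = v/V(P,Q)\in(0,1]$ gives $V(P_{\lambda^*},Q)=v$. Second, convexity of $D(\cdot\Vert Q)$ in its first argument (the standard fact recalled at the opening of this section), together with $D(Q\Vert Q)=0$, yields $D(P_\lambda\Vert Q)\leq \lambda D(P\Vert Q)+(1-\lambda)D(Q\Vert Q)=\lambda D(P\Vert Q)$, so in particular $D(P_{\lambda^*}\Vert Q)\leq D(P\Vert Q)$.

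Combining these, for every $P$ with $V(P,Q)\geq v$ the interpolant $P_{\lambda^*}$ satisfies $V(P_{\lambda^*},Q)=v$ and $D(P_{\lambda^*}\Vert Q)\leq D(P\Vert Q)$; taking the infimum over all such $P$ then gives $\inf_{P:V(P,Q)=v}D(P\Vert Q)\leq D^*(v,Q)$, which is the inequality sought. I do not expect trouble from infinite values: Lemma~\ref{lem:PQA} already exhibits a density $p^*$ achieving distance exactly $v$ with finite divergence, so $D^*(v,Q)$ is finite and an $\eps$-near-optimal $P$ may be taken with $D(P\Vert Q)<\infty$.

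There is no serious obstacle here; if anything, the only point requiring a moment's thought is recognizing that a single elementary move---convex interpolation toward $Q$---accomplishes both goals at once, scaling the total variation linearly in $\lambda$ (so the distance can be dialed down to exactly $v$) while convexity guarantees the divergence does not grow. No compactness or variational argument is needed.
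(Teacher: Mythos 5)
Your proof is correct and takes essentially the same route as the paper's: convex interpolation $P_\lambda=\lambda P+(1-\lambda)Q$, the scaling identity $V(P_\lambda,Q)=\lambda V(P,Q)$ to dial the distance down to exactly $v$, and convexity of $D(\cdot\Vert Q)$ with $D(Q\Vert Q)=0$. The only cosmetic difference is that the paper runs the argument on an $\eps$-near-minimizer $P_\eps$ while you run it on every competitor and then take infima; also, your closing remark about finiteness is unnecessary, since the inequality $D(P_{\lambda^*}\Vert Q)\le D(P\Vert Q)$ holds in the extended reals and the lemma is trivially true when both sides are infinite.
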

\begin{proof}
For any $\eps>0$, let $P_\eps\in\calP$ be such that $V(P_\eps,Q)\ge v$
and
\beqn
\label{eq:Peps}
D(P_\eps\Vert Q)< D^*(v,Q)+\eps
\eeqn
and define,
for $0\le\delta\le1$,
\beq
P_{\eps,\delta} = \delta P_\eps+(1-\delta )Q.
\eeq
Since $V(P_{\eps,\delta},Q)=\delta V(P_{\eps},Q)$,
we may always choose $\overline\delta=\overline\delta(P_\eps)$ so that 
$V(P_{\eps,\overline\delta},Q)=v$.
By convexity of $D(\cdot\Vert\cdot)$, we have
\beq
D(P_{\eps,\overline\delta}\Vert Q) &\le& \overline\delta D(P_\eps\Vert Q) + (1-\overline\delta)D(Q\Vert Q) \\
&\le& D(P_\eps\Vert Q) \\
&<& D^*(v,Q)+\eps,
\eeq
and hence
\beq
\inf_{P:V(P,Q)\ge v}D(P\Vert Q) 
=
\inf_{P:V(P,Q)=v}D(P\Vert Q) .
\eeq
\end{proof}
\begin{proof}[Proof of 
Theorem
\protect\ref{thm:full-range}]
Below we take the infimum over $\calP$ in two steps: 
first over $\calP(Q,A,v)$ and then over $A\in\calF$ satisfying $ Q(A)\le 1-\nicefrac{v}{2}$.
It follows from Lemmas \ref{lem:PQA} and \ref{lem:=eps} that
\beqn
D^*(v,Q) &=& \inf_{P\in\calP:V(P,Q)=v}D(P\Vert Q) \nonumber\\
&=& \inf_{A} \;\inf_{P\in\calP(Q,A,v)} D(P\Vert Q) \nonumber\\
&=& \inf_{A} \;D(P^*\Vert Q) \nonumber\\
&=& \inf_{A} \;\sqprn{ Q(A)a\ln a+Q(\Omega\setminus A)b\ln b} \nonumber\\
&=& \inf_{A} \;\kl_2(Q(A)+\nicefrac{v}{2},Q(A))
\label{eq:verdu},
\eeqn
where $P^*$, $a$ and $b$ are as defined in Lemma \ref{lem:PQA}.
Using the fact \cite{MR1984937,MR2817015} that
\beq
L(v) = \inf_{0<x<1-\nicefrac{v}{2}}\kl_2(x+\nicefrac{v}{2},x),
\eeq
we have that 
$%
D^*(v,Q) = L(v)
$ %
for $Q$ with full range,
which proves 
the claim.
\end{proof}
For the proof of Theorem \protect\ref{thm:rev-pinsker}, we will need
additional lemmata, 
the first of which
will allow us to restrict our attention to distributions with
binary support. Now it is well known \cite{MR1984937} that for each pair of
distributions $P,Q$, there is a pair of binary distributions $P^{\prime
},Q'$ such that $V( P',Q') =V(P,Q)$
and $D(P'\Vert Q') =D(P\Vert Q) $ 
(this fact is generalized to general $f$%
-divergences in \cite{MR2817015}). However, in our case $Q$ is fixed whereas
only $P$ is allowed to vary, and so this result is not directly applicable.
Still, an analogue of this phenomenon also holds in our case.
We will consistently use $\pi$ to denote a map $\Omega\to\set{1,2}$ and
for $Q\in\calP$,
the notation
$\pi(Q)$ refers to the distribution $(Q(\pi\inv(1)),Q(\pi\inv(2)))$ on $\set{1,2}$.
For measurable $A\subseteq\Omega$,
the map 
$\pi_A:\Omega\to\set{1,2}$
is defined by
$\pi_A\inv(1)=A
=\Omega\setminus\pi_A\inv(2)
$.
\begin{lem}
\label{lem:bin-enuf} Let $Q\in\calP$ be a distribution 
whose support
contains at least two points.
Then 
\begin{itemize}
\item[(i)]
For any 
measurable map $\pi:\Omega\to\set{1,2}$ and any
distribution $P'=( p_{1}',p_{2}') $ on $\set{1,2}$,
there exists a $P\in\calP$ such that
$V(P,Q)=V(P',\pi(Q))$ and
$D(P\Vert Q)=D(P'\Vert\pi(Q))$.
In particular,
\beq
D^*(v,\pi(Q))\ge D^*(v,Q),
\qquad v>0.
\eeq
\item[(ii)]
For all $v>0$,
there is a measurable
$\pi:\Omega \rightarrow \{1,2\}$ such that 
$
D^*(v,Q)=D^*( v,\pi (Q)).
$
\end{itemize}
\end{lem}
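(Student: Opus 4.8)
The plan is to establish (i) via an explicit ``lifting'' construction and then to derive (ii) by feeding (i) into the reduction already carried out in the proof of Theorem \ref{thm:full-range}.

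For (i), write $A=\pi\inv(1)$ and $B=\pi\inv(2)$, so that $\pi(Q)=(Q(A),Q(B))$, and assume $0<Q(A)<1$ (the degenerate cases $Q(A)\in\set{0,1}$ are routine). Given $P'=(p_1',p_2')$, I would define $P\in\calP$ through the density
\[
p=\paren{\frac{p_1'}{Q(A)}\chr_A+\frac{p_2'}{Q(B)}\chr_B}q,
\]
which is precisely the structure appearing in Lemma \ref{lem:PQA}: it rescales $q$ on each cell so that $P(A)=p_1'$ and $P(B)=p_2'$, i.e.\ $P$ has the same conditional law as $Q$ on each of $A$ and $B$. A direct computation then yields $V(P,Q)=\abs{p_1'-Q(A)}+\abs{p_2'-Q(B)}=V(P',\pi(Q))$, and, because $\ln(p/q)$ is constant on each cell, $D(P\Vert Q)=p_1'\ln\frac{p_1'}{Q(A)}+p_2'\ln\frac{p_2'}{Q(B)}=D(P'\Vert\pi(Q))$. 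The ``in particular'' statement is then immediate: by Lemma \ref{lem:=eps} applied to $\pi(Q)$ we have $D^*(v,\pi(Q))=\inf_{P':\,V(P',\pi(Q))=v}D(P'\Vert\pi(Q))$, and lifting each competitor $P'$ to a $P$ with $V(P,Q)=v$ and the same divergence exhibits it in the infimum defining $D^*(v,Q)$; taking the infimum over $P'$ gives $D^*(v,\pi(Q))\ge D^*(v,Q)$.

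For (ii), I would begin from the identity
\[
D^*(v,Q)=\inf_{A:\,0<Q(A)\le 1-\nicefrac{v}{2}}\kl_2\paren{Q(A)+\nicefrac{v}{2},\,Q(A)}=\inf_{x\in\mathcal{R}(Q)\cap(0,\,1-\nicefrac{v}{2}]}g(x),\qquad g(x):=\kl_2\paren{x+\nicefrac{v}{2},\,x},
\]
which is established in the course of proving Theorem \ref{thm:full-range} (see (\ref{eq:verdu})) and holds for arbitrary $Q$, the full-range hypothesis being invoked only afterwards. If this infimum is attained, say at $x^*=Q(A^*)$, I set $\pi=\pi_{A^*}$, so that $\pi(Q)=(x^*,1-x^*)$. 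Part (i) already gives $D^*(v,\pi(Q))\ge D^*(v,Q)$, while the binary distribution $P'=(x^*+\nicefrac{v}{2},\,1-x^*-\nicefrac{v}{2})$ satisfies $V(P',\pi(Q))=v$ and $D(P'\Vert\pi(Q))=g(x^*)=D^*(v,Q)$, so that $D^*(v,\pi(Q))\le D^*(v,Q)$; equality follows.

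The main obstacle is therefore to show that the infimum in (ii) is attained. The crucial fact is that the range $\mathcal{R}(Q)$ of a probability measure is compact: splitting $Q$ into its non-atomic part $Q_{\mathrm{na}}$ and its purely atomic part, Sierpi\'nski's theorem makes $\mathcal{R}(Q_{\mathrm{na}})$ a closed interval, the set of atomic subset-sums is the continuous image of the compact cube $\set{0,1}^{\N}$ and hence compact, and $\mathcal{R}(Q)$ is their Minkowski sum. Since $g$ is continuous on $(0,1-\nicefrac{v}{2}]$ with $g(x)\to\infty$ as $x\to0^+$, a minimizing sequence stays inside the compact set $\mathcal{R}(Q)\cap[\eps,1-\nicefrac{v}{2}]$ for a suitable $\eps>0$, so the infimum is attained at some $x^*\in\mathcal{R}(Q)$. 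Finally, if there is no admissible $A$ at all, then $D^*(v,Q)=\infty$ and any nondegenerate $\pi$ works, since part (i) forces $D^*(v,\pi(Q))\ge D^*(v,Q)=\infty$.
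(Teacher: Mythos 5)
Your part (i) is essentially the paper's own proof: the density you write down rescales $q$ on each cell, which is exactly the paper's mixture $P=p_1'\,Q(\cdot\mid\pi\inv(1))+p_2'\,Q(\cdot\mid\pi\inv(2))$, the two computations of $V$ and $D$ are the same, and the ``in particular'' inequality is extracted from Lemma \ref{lem:=eps} in the same way. Part (ii) is where you genuinely diverge. The paper takes a near-minimizer $P_\eps$ (furnished by the proof of Lemma \ref{lem:=eps}), defines $\pi$ by $\pi\inv(1)=\set{p_\eps<q}$ so that total variation is preserved exactly, and applies the data processing inequality to get $D(\pi(P_\eps)\Vert\pi(Q))\le D(P_\eps\Vert Q)<D^*(v,Q)+\eps$. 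You instead invoke the identity (\ref{eq:verdu}) --- correctly observing that its derivation uses only Lemmas \ref{lem:PQA} and \ref{lem:=eps} and not the full-range hypothesis, so there is no circularity --- and reduce (ii) to the attainment of $\inf_x \kl_2(x+\nicefrac{v}{2},x)$ over $x\in\mathcal{R}(Q)\cap(0,1-\nicefrac{v}{2}]$, which you obtain from compactness of $\mathcal{R}(Q)$ (Sierpi\'nski's theorem for the non-atomic part, continuity of the subset-sum map on the compact cube $\set{0,1}^{\N}$ for the atomic part, and a Minkowski sum). This costs a nontrivial measure-theoretic fact, but it buys genuine attainment: you produce one fixed $\pi=\pi_{A^*}$ achieving exact equality. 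That is a real advantage, because in the paper's argument the partition $\pi$ depends on $\eps$; letting $\eps\to 0$ there literally yields only $\inf_\pi D^*(v,\pi(Q))=D^*(v,Q)$, and an attainment argument of precisely your kind is what is needed to extract a single $\pi$ with equality. Your edge cases are also handled soundly: when $\mathcal{R}(Q)\cap(0,1-\nicefrac{v}{2}]$ is nonempty the infimum is automatically finite (each such $x$ gives a finite value), so compactness applies; otherwise $D^*(v,Q)=\infty$ and any nondegenerate $\pi$ works via (i).

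One caveat, which you share with the paper: in (i), both your construction and the paper's require $0<Q(\pi\inv(1))<1$, and the degenerate cases are not quite ``routine'' for the existence claim. If $\Omega=\set{1,2}$ with counting measure $\mu$, $Q=(\nicefrac{1}{2},\nicefrac{1}{2})$, and $\pi\inv(1)=\emptyset$, then $P'=(\nicefrac{1}{2},\nicefrac{1}{2})$ has $V(P',\pi(Q))=1$ and $D(P'\Vert\pi(Q))=\infty$, while every $P\in\calP$ satisfies $D(P\Vert Q)\le\ln 2$, so no lift exists. In the degenerate cases only the ``in particular'' inequality survives (trivially, since then $D^*(v,\pi(Q))=\infty$), which is all that is used downstream; and since your (ii) applies (i) only to the nondegenerate $\pi_{A^*}$, nothing in your argument breaks.
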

\begin{proof}
Let $P'=( p_{1}',p_{2}')$ 
be a distribution on $\{1,2\}$
and define the distribution $P\in \calP$ as the mixture 
$$P=p_{1}'Q(\cdot \left\vert \pi ^{-1}(1)\right. ) 
+  p_{2}'Q( \cdot
\left\vert \pi ^{-1}(2)\right. ) .$$ 
Then 
\beq
V(P,Q) &=& \int_\Omega\abs{p(\omega)-q(\omega)}\,\d\mu(\omega) \\
&=& \int_{\pi\inv(1)}\abs{p_1'\frac{q(\omega)}{Q(\pi\inv(1))}-q(\omega)}\,\d\mu(\omega)
+   \int_{\pi\inv(2)}\abs{p_2'\frac{q(\omega)}{Q(\pi\inv(2))}-q(\omega)}\,\d\mu(\omega) \\
&=& Q(\pi\inv(1))\abs{\frac{p_1'}{Q(\pi\inv(1))}-1}
 +  Q(\pi\inv(2))\abs{\frac{p_2'}{Q(\pi\inv(2))}-1} \\
&=& \abs{p_1'-Q(\pi\inv(1))} + \abs{p_2'-Q(\pi\inv(2))} \\
&=& V(P',\pi(Q))
\eeq
and 
\beq
D(P\Vert Q) &=& \int_{\Omega} p(\omega)\log\frac{p(\omega)}{q(\omega)}\, \d\mu(\omega) \\
&=& \int_{\pi\inv(1)} {p_1'\frac{q(\omega)}{Q(\pi\inv(1))}} \log\frac{p_1'q(\omega)/Q(\pi\inv(1))}{q(\omega)}\, \d\mu(\omega)
 +  \int_{\pi\inv(2)} {p_2'\frac{q(\omega)}{Q(\pi\inv(2))}} \log\frac{p_2'q(\omega)/Q(\pi\inv(2))}{q(\omega)} \,\d\mu(\omega) \\
&=& p_1'\log\frac{p_1'}{Q(\pi\inv(1))} + p_2'\log\frac{p_2'}{Q(\pi\inv(2))} \\
&=& D(P'\Vert\pi(Q)).
\eeq
Hence,
\beq
D^*(v,\pi(Q)) &=& \inf_{P':V(P',\pi(Q))=v} D(P'\Vert \pi(Q)) \\
&=& \inf_{
P=p_{1}'Q(\cdot \gn \pi ^{-1}(1)) 
+  p_{2}'Q( \cdot\gn \pi ^{-1}(2))
:
V(P',\pi(Q))=v
} D(P\Vert Q) \\
&\ge&
\inf_{P:V(P,Q)=v} D(P\Vert Q) \\
&=& D^*(v,Q),
\eeq
where the first and last identities follow from Lemma~\ref{lem:=eps}.
This proves (i).
For any $\eps>0$, the proof of Lemma~\ref{lem:=eps}
furnishes a $P_\eps\in\calP$ such that $V(P_\eps,Q)=v$ and
$D(P_\eps\Vert Q)<D^*(v,Q)+\eps$.
Define $\pi $ by 
\begin{equation*}
\pi (\omega )=%
\begin{cases}
1, & \quad\text{$p_\eps(\omega )<q(\omega )$}, \\ 
2, & \quad\text{else}.%
\end{cases}%
~.
\end{equation*}%
Then
\beq
v=V(P_\eps,Q) &=& 
\int_{p_\eps<q}\abs{p_\eps(\omega)-q(\omega)}\,\d\mu(\omega)
+
\int_{p_\eps\ge q}\abs{p_\eps(\omega)-q(\omega)}\,\d\mu(\omega)\\
&=& V(\pi(P_\eps),\pi(Q))
\eeq
and 
\begin{equation*}
D(\pi(P_\eps)\Vert\pi(Q))
\le
D(P_\eps\Vert Q)
<
D^*(v,Q)+\eps,
\end{equation*}
where the first inequality follows from the
data processing inequality \cite[Theorem 9]{erven2012}. 
Since $\eps>0$ is arbitrary, we have that
$D^*(v,\pi(Q))\le D^*(v,Q)$.
Taking $P'=\pi(P_\eps)$, 
it follows from (i) that
$D(P'\Vert\pi(Q))=D(P_\eps\Vert Q)$, which proves (ii).
\end{proof}
Next, we characterize the extremal $P^*$
satisfying $D(P^*\Vert Q) =D^{\ast}(v,Q)$
in the binary case.
\begin{lem}
\label{lem:opt-q-bin} 
Let $Q=( q_{0},1-q_{0}) $ be a binary
distribution with $q_{0}>\nicefrac{1}{2}$ and 
$v\in ( 0,2q_{0}] $. 
Then the unique $P^*$ satisfying $V( P^*,Q) =v$ and 
$D(P^*\Vert Q) =D^*(v,Q)$ is 
\begin{equation*}
P^*=\paren{ q_{0}-\frac{v}{2},1-q_{0}+\frac{v}{2}} .
\end{equation*}
\end{lem}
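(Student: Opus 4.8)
The plan is to exploit the rigidity of the binary setting. Since every distribution on $\set{1,2}$ has the form $P=(p_{1},1-p_{1})$, and $Q=(q_{0},1-q_{0})$, we have $V(P,Q)=2\abs{p_{1}-q_{0}}$; hence the constraint $V(P,Q)=v$ forces $p_{1}=q_{0}\pm\nicefrac{v}{2}$. By Lemma~\ref{lem:=eps}, $D^{*}(v,Q)=\inf_{V(P,Q)=v}D(P\Vert Q)$, so this infimum is in fact a minimum over the at most two candidate distributions
$$
P_{-}=\paren{q_{0}-\frac{v}{2},\,1-q_{0}+\frac{v}{2}},\qquad
P_{+}=\paren{q_{0}+\frac{v}{2},\,1-q_{0}-\frac{v}{2}}
$$
(equivalently, $P_{\pm}$ are the per-class minimizers produced by Lemma~\ref{lem:PQA} for the choices $A=\set{2}$ and $A=\set{1}$). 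The first step is a feasibility check: $P_{-}$ is a genuine distribution for every $v\in(0,2q_{0}]$, whereas $P_{+}$ is one only when $v\le2(1-q_{0})$. As $q_{0}>\nicefrac{1}{2}$ gives $2(1-q_{0})<2q_{0}$, in the range $2(1-q_{0})<v\le2q_{0}$ only $P_{-}$ is admissible, and the claim $P^{*}=P_{-}$ is immediate.

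It remains to treat $0<v\le2(1-q_{0})$, where both candidates compete, and to show that $P_{-}$ strictly wins:
$$
\kl_{2}\paren{q_{0}-\frac{v}{2},\,q_{0}}<\kl_{2}\paren{q_{0}+\frac{v}{2},\,q_{0}}.
$$
I would establish this by a one-variable monotonicity argument. Put $F(v)=\kl_{2}(q_{0}+\nicefrac{v}{2},q_{0})-\kl_{2}(q_{0}-\nicefrac{v}{2},q_{0})$, so that $F(0)=0$, and differentiate. Writing $\partial_{1}\kl_{2}$ for the derivative in the first argument, one has $\partial_{1}\kl_{2}(p,q_{0})=\ln\frac{p(1-q_{0})}{(1-p)q_{0}}$, and the two contributions combine into
$$
F'(v)=\frac{1}{2}\ln\frac{(q_{0}^{2}-v^{2}/4)(1-q_{0})^{2}}{((1-q_{0})^{2}-v^{2}/4)\,q_{0}^{2}}.
$$
Clearing denominators, the condition $F'(v)>0$ reduces, after cancellation of the common term and a factor $v^{2}/4$, to the single comparison $q_{0}^{2}>(1-q_{0})^{2}$, which holds precisely because $q_{0}>\nicefrac{1}{2}$. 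Thus $F$ is strictly increasing on $(0,2(1-q_{0}))$, so $F(v)>F(0)=0$, which is the desired strict inequality.

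Combining the two regimes shows that $P_{-}$ is the unique minimizer of $D(\cdot\Vert Q)$ subject to $V(\cdot,Q)=v$, and this is exactly the asserted $P^{*}$. Essentially the entire content is the sign determination of $F'$; the rest is bookkeeping over which of the two forced candidates is feasible. The only mild obstacle I anticipate is the boundary value $v=2(1-q_{0})$, where $P_{+}=(1,0)$ lies on the edge of the simplex and $\kl_{2}$ must be read with the convention $0\ln0=0$. This case is handled by continuity: since $\kl_{2}(q_{0}+\nicefrac{v}{2},q_{0})\to-\ln q_{0}$ as $q_{0}+\nicefrac{v}{2}\to1$, the function $F$ extends continuously to the endpoint and, being strictly increasing, remains strictly positive there.
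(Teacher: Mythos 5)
Your proof is correct, and its skeleton is the same as the paper's: both use Lemma~\ref{lem:=eps} to shrink the feasible set to the two points $P_{\pm}=\paren{q_0\pm\nicefrac{v}{2},\,1-q_0\mp\nicefrac{v}{2}}$, observe that $P_{+}$ is only a distribution when $v\le 2(1-q_0)$, and then prove $\kl_{2}\paren{q_0-\nicefrac{v}{2},q_0}<\kl_{2}\paren{q_0+\nicefrac{v}{2},q_0}$ on the remaining range. Where you genuinely diverge is in how that comparison is proved, and your route is not just different but more robust. The paper sandwiches the two candidates around $\frac{(\nicefrac{v}{2})^{2}}{2q_0(1-q_0)}$ using the Lagrange form of the Taylor expansion $\kl_{2}(q_0+x,q_0)=\frac{x^{2}}{2(q_0+\theta)(1-q_0-\theta)}$ with $\theta$ between $0$ and $x$; for the candidate $x=-\nicefrac{v}{2}$ this needs $(q_0+\theta)(1-q_0-\theta)>q_0(1-q_0)$, i.e.\ $q_0+\theta>1-q_0$, which is guaranteed only when $v\le 4(q_0-\nicefrac{1}{2})$. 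For $\nicefrac{1}{2}<q_0<\nicefrac{2}{3}$ and $v$ between these thresholds the paper's left-hand inequality can actually fail: with $q_0=0.6$ and $v=0.8$ one gets $\kl_{2}(0.2,0.6)\approx 0.3348>\nicefrac{1}{3}=\frac{(0.4)^{2}}{2\cdot 0.6\cdot 0.4}$, although the lemma's conclusion survives there since $\kl_{2}(1,0.6)=\ln\frac{5}{3}\approx 0.51$. Your argument --- showing that $F(v)=\kl_{2}(q_0+\nicefrac{v}{2},q_0)-\kl_{2}(q_0-\nicefrac{v}{2},q_0)$ satisfies $F'(v)>0$ by reducing the sign of the logarithm to the comparison $q_0^{2}>(1-q_0)^{2}$ (your derivative computation checks out) --- is valid on all of $(0,2(1-q_0))$, and your continuity remark correctly disposes of the endpoint $v=2(1-q_0)$, where $P_{+}=(1,0)$ sits on the boundary of the simplex. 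So your proof establishes the lemma on its full stated range $v\in(0,2q_0]$, something the paper's written argument achieves only for $q_0\ge\nicefrac{2}{3}$; in effect you have repaired a gap in the published proof.
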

\begin{proof}
By Lemma \ref{lem:=eps}, there are at most two possibilities for $P^{\ast}$,
namely,
\[
P^{\ast}=P_{1}=\paren{  q_{0}-\frac{v}{2},1-q_{0}+\frac{v}{2}}
\]
and
\[
P^{\ast}=P_{2}=\paren{  q_{0}+\frac{v}{2},1-q_{0}-\frac{v}{2}}  .
\]
(Actually, if $v>2(1-q_{0})$ then only $P_{1}$ is a valid distribution.) 
A second-order Taylor expansion yields
\[
\mathrm{KL}_{2}(  q_{0}+x,q_{0})  =\frac{1}{2}\frac{x^{2}}{(
q_{0}+\theta)  (  1-q_{0}-\theta)  }%
\]
for some $\theta$ between $0$ and $x.$ Hence
\beq
\mathrm{KL}_{2}\paren{  q_{0}-\frac{v}{2},q_{0}}
  <\frac{1}{2}
\frac{(  \nicefrac{v}{2})  ^{2}}{q_{0}(  1-q_{0})} 
<\mathrm{KL}_{2}\paren{  q_{0}+\frac{v}{2},q_{0}}
\eeq
for all $v\in(  0,2(  1-q_{0})  ]  ,$ which implies that
$P^{\ast}=P_{1}$.
\end{proof}
\begin{proof}[Proof of Theorem \protect\ref{thm:rev-pinsker} (a)]
The first inequality is an immediate consequence of Lemma~\ref{lem:=eps}.
To prove the second one,
let $Q\in\calP$ be a distribution 
with balance coefficient $\bal$, and $0<v<1$.
By definition of $\bal$, for all $\eps>0$ there is a measurable $A_\eps\subseteq\Omega$ 
such that
$\bal\le 
Q(A_\eps)
\le\bal+\eps$.
Then Lemma~\ref{lem:bin-enuf}(i) implies that
\beq
D^*(v,Q) \le D^*(v,\pi_{A_\eps}(Q))
\eeq
and by taking $\eps$ arbitrarily small,
\beq
D^*(v,Q) \le D^*(v,Q'),
\eeq
where $Q'=(\bal,1-\bal)$. Finally, Lemma~\ref{lem:opt-q-bin} implies
that $D^*(v,Q')=\kl_{2}(\bal -\nicefrac{v}{2},\bal)$.
\end{proof}
\begin{lem}
\label{lem:voks} 
For every fixed $0<\delta<\nicefrac{1}{2}$,
the binary divergence 
$
\mathrm{KL}_{2}(x-\delta ,x)
$
is strictly increasing in $x$ on 
$\left[ \nicefrac{1}{2}+\nicefrac{\delta}{2},1\right]$. 
\end{lem}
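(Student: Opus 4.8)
The plan is to reduce the claim to the positivity of the derivative $f'(x)$ of $f(x):=\mathrm{KL}_{2}(x-\delta,x)$ on the interior of the interval, strict monotonicity being an immediate consequence. Writing $\mathrm{KL}_2(p,q)$ with $p=x-\delta$ and $q=x$ and differentiating via $\tfrac{d}{dx}=\partial_p+\partial_q$, one obtains after the elementary simplification $\tfrac{\delta}{x}+\tfrac{\delta}{1-x}=\tfrac{\delta}{x(1-x)}$ that
\[
f'(x)=\ln\frac{(x-\delta)(1-x)}{x(1-x+\delta)}+\frac{\delta}{x(1-x)}.
\]
The move that makes everything transparent is the substitution $a=\delta/x$ and $b=\delta/(1-x)$: the logarithmic argument becomes $(1-a)/(1+b)$ and the rational term becomes $a+b$, so that $f'(x)=g(a,b)$ with
\[
g(a,b)=a+b+\ln(1-a)-\ln(1+b).
\]

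The key observation is that $g$ has opposite monotonicities in its two arguments, $\partial_a g=-a/(1-a)<0$ and $\partial_b g=b/(1+b)>0$, while as $x$ increases $a=\delta/x$ decreases and $b=\delta/(1-x)$ increases. By the chain rule $\tfrac{d}{dx}g>0$, so $f'$ is itself strictly increasing in $x$ and therefore attains its minimum over $\left[\nicefrac{1}{2}+\nicefrac{\delta}{2},1\right)$ at the left endpoint $x=\nicefrac{1}{2}+\nicefrac{\delta}{2}$, where $a=2\delta/(1+\delta)$ and $b=2\delta/(1-\delta)$. There the pleasant identity $(1-a)(1+b)=1$ collapses the two logarithms, and a short computation gives the boundary value
\[
f'\!\left(\nicefrac{1}{2}+\nicefrac{\delta}{2}\right)=\frac{4\delta}{1-\delta^{2}}-2\ln\frac{1+\delta}{1-\delta}=:\Lambda(\delta).
\]

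It then remains to verify the clean one-variable inequality $\Lambda(\delta)>0$ for $0<\delta<\nicefrac{1}{2}$, which follows from $\Lambda(0)=0$ together with $\Lambda'(\delta)=8\delta^{2}/(1-\delta^{2})^{2}>0$, so that $\Lambda$ is strictly increasing and hence positive on $(0,1)$. Combining this with the monotonicity step gives $f'(x)\ge\Lambda(\delta)>0$ throughout, which proves the lemma. I expect the only genuine obstacle to be establishing positivity of $f'$ near the left endpoint: there $f'$ is of order $\delta^{3}$, so any crude first-order bound on the logarithms (for instance $\ln(1+b)\le b$ paired with $\ln(1-a)\ge-a/(1-a)$) overshoots and produces the wrong sign. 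The virtue of the $(a,b)$ substitution is precisely that it replaces this delicate pointwise estimate with a monotonicity argument whose only remaining quantitative input is the transparent boundary inequality $\Lambda>0$.
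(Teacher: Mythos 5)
Your proof is correct, and it shares the paper's skeleton: both arguments reduce the lemma to showing that the derivative of $F(x)=\mathrm{KL}_{2}(x-\delta,x)$ attains its minimum over the interval at the left endpoint $x=\nicefrac{1}{2}+\nicefrac{\delta}{2}$, and both then prove positivity there by the identical one-variable calculus step --- your $\Lambda$ is exactly the paper's $G$, with the same derivative $8\delta^{2}/(1-\delta^{2})^{2}$. Where you genuinely diverge is in justifying that reduction. The paper observes that $x\mapsto(x-\delta,x)$ is affine, so joint convexity of the KL-divergence makes $F$ convex and hence $F'$ non-decreasing; this is a one-line appeal to a standard structural fact. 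You instead prove strict monotonicity of $f'$ by hand via the substitution $a=\delta/x$, $b=\delta/(1-x)$, writing $f'=a+b+\ln(1-a)-\ln(1+b)$ and exploiting the opposite monotonicities in $a$ and $b$ together with the opposite directions in which $a$ and $b$ move as $x$ grows. Your route is self-contained and elementary (no imported convexity theorem), at the cost of a somewhat longer computation; the paper's is shorter but leans on the cited convexity of $D(\cdot\Vert\cdot)$. Your closing remark is also on point: since $\Lambda(\delta)=\Theta(\delta^{3})$, crude first-order bounds on the logarithms at the endpoint produce the wrong sign, and both proofs evade this delicacy the same way --- by funneling it into the monotone one-variable function $\Lambda=G$, whose positivity follows from $\Lambda(0)=0$ and $\Lambda'>0$.
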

\begin{proof}
Define the function 
\begin{equation*}
F(x)
=\mathrm{KL}_{2}(x-\delta ,x).
\end{equation*}%
Since KL-divergence is jointly convex in the distributions, 
$F$ is a convex function.
Thus, it is sufficient to
prove that 
$F'(x)$
is positive for $x=\nicefrac{1}{2}+%
\nicefrac{\delta}{2}.$ 
We have 
\begin{equation*}
F'(x)
=
\frac{\delta +(1-x)x\ln (1-\frac{\delta }{x}%
)+(1-x)x\ln \frac{1-x}{1-x+\delta }}{(1-x)x}
\end{equation*}%
and
\beq
F'(\nicefrac{1}{2}+\nicefrac{\delta}{2})
&=& \frac{4\delta}{1-\delta^2} +2\log\frac{1-\delta}{1+\delta} 
\;=:\;
G(\delta).
\eeq
Now $G(0)=0$ and
\beq
G'(\delta) = 8\paren{\frac{\delta}{1-\delta^2}}^2 >0,
\eeq
which proves the lemma.
\end{proof}
 
\begin{proof}[Proof of Theorem \protect\ref{thm:rev-pinsker} (b)]
Consider a $Q\in\calP$ with balance coefficient $\bal>\nicefrac{1}{2}$ and
$0<v<4(\bal-\nicefrac12)$.
Then
Lemma \ref{lem:bin-enuf}
implies that
\beq
D^*(v,Q) &=& \inf_{A\in\sigalg}\; D^*(v,\pi_A(Q)) \\
&=& \inf_{A\in\sigalg: Q(A)>\nicefrac{1}{2}}\; D^*(v,(Q(A),(1-Q(A)))),
\eeq
where the second identity holds because
$D^*(v,(q_0,q_1))=D^*(v,(q_1,q_0))$.
Invoking Lemma~\ref{lem:opt-q-bin}, we have that for $Q(A)>\nicefrac12$,
\beq
D^*(v,\pi_A(Q)) = \kl_2\paren{ Q(A)-\frac{v}{2},Q(A)}
\eeq
and hence
\beq
D^*(v,Q) &=& \inf_{A: Q(A)>\nicefrac12}\; \kl_2\paren{ Q(A)-\frac{v}{2},Q(A)}.
\eeq
Since $\nicefrac12+\nicefrac{v}{4}\le\bal\le Q(A)$,
we may invoke Lemma~\ref{lem:voks} with $x=Q(A)$ and $\delta=\nicefrac{v}{2}$
to conclude that 
$D^*(v,Q) = \kl_2\paren{ \bal-\frac{v}{2},\bal}$.
\end{proof}
\section{Application: convergence of the empirical distribution}
The results in \cite{Berend2013} have bearing on the convergence of the empirical
distribution to the true one in the total variation norm. More precisely, 
the paper considers
a sequence of i.i.d. $\mathbb{N}$-valued random variables
$X_{1},X_{2},\ldots$, distributed according to $Q=(q_{1},q_{2},\ldots )$ and
denotes
\begin{equation*}
J_{n}=V( Q,\hat{Q}_{n}) ,\qquad n\in \mathbb{N},
\end{equation*}%
where $\hat{Q}_{n}$ is the empirical distribution induced by the first $n$
observations. Let us recall Sanov's Theorem \cite{MR2239987,MR1739680}, which
yields 
\begin{equation*}
-\lim_{n\rightarrow \infty }\frac{1}{n}\ln Q( J_{n}-\E%
J_{n}>\eps ) =D^*(\eps ,Q).
\end{equation*}
Since the map $( X_{1},\ldots ,X_{n}) \mapsto J_{n}$ is $%
\nicefrac{2}{n}$-Lipschitz continuous with respect to the Hamming distance,
McDiarmid's inequality \cite{mcdiarmid89} implies 
\begin{equation}
\label{eq:BK}
Q( \abs{ J_{n}-\E J_{n}} >\eps )
\leq 2\exp \paren{ -\frac{n\eps ^{2}}{2}},
\qquad
n\in \mathbb{N},\eps>0.
\end{equation}%
Being a rather general-purpose tool, 
in many cases 
McDiarmid's bound does
not yield optimal estimates. Since for balanced
distributions $D^*(\eps ,Q)\leq \eps ^{2}/2+O(
\eps ^{4}) $, we see that the estimate in (\ref{eq:BK})
actually has the optimal constant $\nicefrac{1}{2}$ in the exponent. (See 
\cite[Theorem 1]{MR1822385} for other instances where the quantity $%
\eps ^{2}/2$ emerges in the exponent.) We also see that the McDiarmid's
bound must be suboptimal for unbalanced distributions.
The exponential decrease of $Q( \left\vert J_{n}-\E J_{n}\right\vert >\eps ) $ implies that $J_{n}-\E J_{n}$
tends to zero almost surely. We should note that $\E J_{n}$ will tend
to zero but the rate of convergence may be arbitrarily slow. In \cite%
{Berend2013} it was shown that 
\begin{equation*}
\E J_{n}\leq n^{\nicefrac{\textrm{-}1}{2}}\sum_{j\in \mathbb{N}%
}q_{j}^{\nicefrac{1}{2}}
\end{equation*}%
and that for $Q$ with finite support of size $k$, 
\begin{equation*}
\E J_{n}\leq \paren{ \frac{k}{n}} ^{\nicefrac{1}{2}}.
\end{equation*}%
In greater generality, it was shown that 
\begin{equation*}
{\frac{1}{4}}(\Lambda _{n}-n^{\nicefrac{\textrm{-}1}{2}})\leq \E J_{n}\leq \Lambda _{n},\qquad n\geq 2,
\end{equation*}%
where 
$$
\Lambda _{n}(Q)={n^{\nicefrac{\textrm{-}1}{2}}\sum_{q_{j}\geq 1/n}q_{j}^{%
\nicefrac{1}{2}}} 
+
2\sum_{q_{j}<1/n}q_{j}
$$
tends to zero for $n$ tending to infinity, although the rate at which $%
\Lambda _{n}(Q)$ decays  may be arbitrarily slow, depending on $Q$.
\section*{Acknowledgements}
We thank L{\'a}szl{\'o} Gy{\"o}rfi and Robert Williamson for helpful
correspondence, and in particular for bringing \cite{MR1822385} to our
attention. We are grateful to Sergio Verd\'u for a careful reading
of the paper and useful suggestions.
The comments of the anonymous referees have greatly contributed to the quality of this paper. In particular the proof of 
Theorem~\ref{thm:full-range} has been considerably simplified due to their comments.
\bibliographystyle{plain}
\bibliography{mybib}
\end{document}